\newif\ifcomments  
\newcommand{\bfI}{\ensuremath{\mathbf{I}}}
\newcommand{\bfm}{\ensuremath{\mathbf{m}}}
\newcommand{\bfv}{\ensuremath{\mathbf{v}}}
\newcommand{\calA}{\ensuremath{\mathcal{A}}}
\newcommand{\calD}{\ensuremath{\mathcal{D}}}
\newcommand{\calM}{\ensuremath{\mathcal{M}}}
\newcommand{\calN}{\ensuremath{\mathcal{N}}}
\newcommand{\calR}{\ensuremath{\mathcal{R}}}
\newcommand{\bbI}{\ensuremath{\mathbb{I}}}
\newtheorem{lem}{Lemma}
\newtheorem{thm}[lem]{Theorem}
\newtheorem{claim}[lem]{Claim}
\newcommand{\vast}{\bBigg@{4}}
\newcommand{\Vast}{\bBigg@{5}}
\newcommand{\ltwo}[1]{\left\|#1\right\|_2}
\DeclarePairedDelimiterX{\infdivx}[2]{(}{)}{%
  #1\;\delimsize\|\;#2%
}
\renewcommand{\epsilon}{\varepsilon}
\newcommand{\adv}{{\sf Adv}}
\newcommand{\Fstar}{F^*}
\title{Fully Adaptive Composition for  Gaussian\\ Differential Privacy}
\author{
Adam Smith\thanks{Boston University. \texttt{ads22@bu.edu}}
\and
Abhradeep Thakurta\thanks{Google Research - Brain Team.  \texttt{athakurta@google.com}}
}
\begin{document}
\maketitle

\begin{abstract}
    We show that Gaussian Differential Privacy, a variant of differential privacy tailored to the analysis of Gaussian noise addition, composes gracefully even in the presence of a \textit{fully adaptive} analyst. Such an analyst selects mechanisms (to be run on a sensitive data set) and their privacy budgets adaptively, that is, based on the answers from other mechanisms run previously on the same data set. In the language of Rogers, Roth, Ullman and Vadhan, this gives a filter for GDP with the same parameters as for nonadaptive composition. 
\end{abstract}

Let $\calD$ be a domain of data records, and $\calR$ be a set of potential outputs for a mechanism (without loss of generality, $\{0,1\}^{*}$). We say a randomized function (henceforth, a \textit{mechanism}) $\calM:D^* \to \calR$ satisfies $\mu$-\textit{Gaussian differential privacy} (or $\mu$-GDP)~\cite{GDP-DL} if for every pair of neighboring data sets $D$ and $D'$, the distributions $\calM(D)$ and $\calM(D')$ are ``at least as hard to tell apart'' as two Gaussian random variables with standard deviation 1 and means 0 and $\mu$. Specifically, if $Z\sim \calN(0,1)$, then we ask that that exists a randomized map $F$ (depending on $D$ and $D'$) such that $\calM(D)\sim F(Z)$ and $\calM(D') \sim F(\mu + Z)$. One can also characterize GDP, via Blackwell's theorem, in terms of bounds on the ROC curve for all tests that aim to distinguish $\calM(D)$ from $\calM(D')$~\cite{GDP-DL}; we will not need that characterization here.

We prove that Gaussian Differential Privacy (GDP)~\cite{GDP-DL} composes even when the analyst is fully adaptive; the privacy parameters adds up in exactly the same way as in nonadaptive composition. In the language of \cite{odometers}, there is a \emph{filter} for this privacy concept that simply sums the squares of  privacy budgets and compares them to a threshold.

Consider an interaction between a data curator $\calM_0$ holding a data set $D$ and an adversarial analyst $\adv$. In round $i$, the analyst specifies a privacy budget $\mu_i$  and a mechanism $\calM_i$ that is $\mu_i$-GDP; the curator computes $a_i = \calM_i(D)$ and sends it to the analyst. The analyst's choices at each stage can depend arbitrarily on the interaction up to that point. 

What restriction can the curator impose to ensure that the overall interaction is $\mu_0$-GDP, for some predetermined budget $\mu_0$? If the sequence of $\mu_i$'s were fixed ahead of time, then it would suffice to require that $\sum_i \mu_i^2 \leq \mu)^2$—this is the composition result of \cite{GDP-DL}. We show that this same restriction suffices even when the $\mu_i$'s are selected adaptively.

\begin{thm} Let $\calM_0$ be an interactive mechanism (as above) which ensures that for every analyst, every sequence  of queries  $(\calM_1,\mu_1),...,(\calM_T,\mu_T)$ satisfies \[\sum\limits_{i=1}^n\mu_i^2\leq \mu_0^2\ . \] Then $\calM_0$ satisfies $\mu_0$-GDP.\label{thm:GDP}
\end{thm}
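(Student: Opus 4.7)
The plan is to invoke the coupling (Blackwell) characterization of GDP directly: for any pair of neighbors $D, D'$ and any adversary strategy, I will exhibit a single randomized map $F$ such that the full interaction transcript on input $D$ is distributed as $F(Z)$ and on $D'$ as $F(\mu_0 + Z)$, where $Z \sim \calN(0,1)$. Exhibiting such an $F$ directly certifies $\mu_0$-GDP of the overall interactive mechanism $\calM_0$, so the entire task reduces to building this coupling.

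I would drive all of the rounds from a single standard Brownian motion $(B_t)_{t \in [0, \mu_0^2]}$. For each round $i$, the per-round $\mu_i$-GDP assumption provides a post-processing $F_i$ with $\calM_i(D) \sim F_i(Z_i)$ and $\calM_i(D') \sim F_i(Z_i + \mu_i)$, where $Z_i \sim \calN(0,1)$ is independent of the history. I realize these $Z_i$'s as scaled Brownian increments: set $\tau_i = \sum_{j \leq i} \mu_j^2$ and $Z_i := (B_{\tau_i} - B_{\tau_{i-1}})/\mu_i$. Because each $\mu_j$ is measurable with respect to the answers up to round $j-1$, $\tau_i$ is a stopping time in the joint filtration generated by $B$ and the $F_j$'s internal coins, and the strong Markov property guarantees that $Z_i \sim \calN(0,1)$ conditionally on the history --- exactly what is needed to stitch the per-round couplings together. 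Replacing $B$ by the drifted path $\widetilde{B}_t := B_t + t$ turns the $i$-th increment into $\mu_i Z_i + \mu_i^2$, so the corresponding noise becomes $Z_i + \mu_i$, which is precisely the noise realization under $D'$. Writing $G$ for the deterministic simulator that produces the transcript from the Brownian path and independent coins, we have that the transcript under $D$ equals $G(B_{[0,\mu_0^2]})$ in law and the transcript under $D'$ equals $G(\widetilde{B}_{[0, \mu_0^2]})$ in law. The budget constraint $\sum_i \mu_i^2 \leq \mu_0^2$ is used precisely here: it guarantees $\tau_\infty \leq \mu_0^2$ almost surely, so a Brownian motion on $[0,\mu_0^2]$ always suffices, and any portion unused by the interaction can be folded into $F$'s internal randomness.

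The final step collapses the Brownian paths down to a single Gaussian via a Brownian-bridge decomposition. Write $\beta_s := B_s - (s/\mu_0^2) B_{\mu_0^2}$ for the bridge to the endpoint; a direct computation shows $\widetilde{B}_s - (s/\mu_0^2)\widetilde{B}_{\mu_0^2} = \beta_s$ as well, so the same bridge appears in both processes and is independent of the respective endpoint. Thus $B_{[0,\mu_0^2]}$ is sampled by independently drawing $\beta$ and endpoint $B_{\mu_0^2} \sim \calN(0, \mu_0^2)$ and adding the linear interpolation; $\widetilde{B}_{[0,\mu_0^2]}$ is sampled the same way but with endpoint $\sim \calN(\mu_0^2, \mu_0^2)$. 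Setting $z := \text{endpoint}/\mu_0$, the rescaled endpoint is $\calN(0,1)$ under $D$ and $\calN(\mu_0, 1)$ under $D'$. I then define $F(z)$ to sample an independent Brownian bridge $\beta$, form the path $s \mapsto \beta_s + s z/\mu_0$, and feed it through $G$ with independent coins for the $F_i$'s; then $F(Z)$ reproduces the $D$-transcript and $F(\mu_0 + Z)$ the $D'$-transcript, completing the proof.

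The main technical obstacle is the stopping-time bookkeeping: one has to carefully set up a joint probability space carrying the Brownian motion, the adversary's randomness, and each $F_i$'s internal coins, so that the $\tau_i$ are bona fide stopping times and strong Markov is legitimate at each. A secondary nuisance is the adaptive horizon (the adversary may halt at any round, or use a data-dependent number of rounds); extending $B$ to all of $[0,\mu_0^2]$ regardless, and treating the unused tail as part of $F$'s internal randomness, handles this cleanly.
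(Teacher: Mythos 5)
Your proof is correct, and it takes a genuinely different route from the paper's, even though the two constructions are secretly the same object viewed at different resolutions. The paper first performs the same reduction you do (replacing each $\calM_i$ by a post-processing $F_i$ of a pure Gaussian-noise mechanism), but then builds the simulator $\Fstar$ \emph{discretely and explicitly}: it posits the form $W_i = m_i W_0 + U_i$ and shows that the required correction noise $(U_1,\ldots,U_T)\sim\calN(\mathbf{0}, \bbI - \bfm\bfm^\top)$ can be generated online via the \emph{canonical Cholesky factor} of $\bbI - \bfm_i\bfm_i^\top$, observing that the top-left $i\times i$ block of the Cholesky factor depends only on the first $i$ queries. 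Your continuous-time construction replaces this linear-algebraic bookkeeping with a Brownian motion on $[0,\mu_0^2]$: the strong Markov property at the stopping times $\tau_i = \sum_{j\le i}\mu_j^2$ handles the adaptivity (where the paper must instead argue pointwise over histories), and the Brownian-bridge/endpoint decomposition---with the crucial observation that adding drift $t$ leaves the bridge invariant and only shifts the endpoint---collapses the whole transcript to a post-processing of a single $\calN(b\mu_0,1)$ draw. In fact a short computation shows that the covariance of the rescaled bridge increments $(\beta_{\tau_i}-\beta_{\tau_{i-1}})/\mu_i$ is exactly $\bbI - \bfm\bfm^\top$, so your $U_i$'s and the paper's agree in law; your $L_i\bfv_i$ is the discrete skeleton of the bridge. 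What your approach buys is conceptual transparency: adaptivity is absorbed into standard stopping-time machinery, and the ``online sampling of a constrained Gaussian'' problem disappears into the bridge decomposition, with an obvious path to generalizations (random horizons, continuous-time mechanisms). What the paper's approach buys is self-containedness: it uses nothing beyond finite-dimensional Gaussians and the Cholesky factorization, with no appeal to the strong Markov property or bridge independence. One small point you should make explicit when fleshing this out: the filtration carrying the stopping times must also include the adversary's own coins (not only $B$ and the $F_j$'s coins), since $\mu_i$ can depend on the adversary's private randomness; once that is in the filtration, the induction showing each $\tau_i$ is a bona fide stopping time goes through as you sketch.
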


A different proof of this statement was given independently and concurrently by Koskela et al.~\cite[Theorem 11]{koskela2022individual}.

\begin{proof} 
Fix two neighboring data sets $D$ and $D'$. For the purposes of analyzing privacy, we can assume w.l.o.g. that the mechanism receives as input just a bit $b\in \{0,1\}$ specifying which of $D$ and $D'$ to use. A mechanism $\cal A$ is then $\mu$-GDP if there is a randomized function $F:\mathbb{R} \to \calR$ such that $\calA(b) = F(b\cdot \mu + Z)$ where $Z\sim N(0,1)$. 

Now consider the following interaction procedure between a mechanism $\calA$ and an adversary $\adv$: in round $i$, $\adv$ sends $\mu_i$ to $\calA$ and receives $H_i = b\cdot\mu_i+Z_i$, where $Z_i\sim\calN(0,1)$ is a fresh draw from the standard Gaussian distribution. The mechanism continues to answer queries as long as $\sum_i \mu_i^2 \leq \mu_0^2$.
The adversary may select $\mu_i$ based on $H_1,..,H_{i-1}$, but the mechanism applied at each round always consists of additive Gaussian noise. We first show that to prove Theorem~\ref{thm:GDP}, it suffices to show that $\calA$ is $\mu_0$-GDP.

\begin{claim}
If, for all $\adv$, the mechanism $\calA$ defined above satisfies $\mu_0$-GDP, then so is $\calM_0$.
\label{cl:1}
\end{claim}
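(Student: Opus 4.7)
The plan is to prove Claim~\ref{cl:1} by a direct reduction/simulation argument: any execution of $\calM_0$ against an adversary can be simulated as a post-processing of an execution of $\calA$ against a related adversary, with the same distribution on both inputs $b=0$ and $b=1$. Because post-processing preserves GDP (a consequence of Blackwell's theorem, or of the data-processing inequality in the language of $f$-DP), the $\mu_0$-GDP guarantee for $\calA$ then transfers to $\calM_0$.

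First I would fix two neighboring data sets $D, D'$ and, as in the reduction in the paragraph preceding the claim, restrict attention to the bit-input view in which the mechanism takes $b \in \{0,1\}$ and uses $D$ if $b=0$, $D'$ if $b=1$. Next, for each round $i$, since the analyst's chosen mechanism $\calM_i$ is $\mu_i$-GDP, the definition guarantees a randomized map $F_i : \R \to \calR$ (which may depend on $D, D'$, the specific $\calM_i$, and hence on the transcript through the analyst's choices) such that $\calM_i(b) \sim F_i(b\cdot \mu_i + Z_i)$ for $Z_i \sim \calN(0,1)$.

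Given any analyst $\adv$ interacting with $\calM_0$, I would then construct an adversary $\adv'$ against $\calA$ as follows. $\adv'$ internally simulates $\adv$; whenever $\adv$ produces a query $(\calM_i, \mu_i)$, $\adv'$ forwards the budget $\mu_i$ to $\calA$, receives $H_i = b\cdot\mu_i + Z_i$, computes $a_i = F_i(H_i)$ using its own randomness, and passes $a_i$ back to $\adv$. By the characterization of $\mu_i$-GDP given above, the answer $a_i$ produced in this way is distributed identically to $\calM_i(b)$, conditional on the previous transcript. A short induction on the round index then shows that the joint distribution of the simulated transcript under $\adv'$ matches the true distribution of the transcript under $\adv$, for each value of $b$. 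Moreover, the halting rule $\sum_i \mu_i^2 \leq \mu_0^2$ is determined by the query budgets alone, so it is preserved by the simulation.

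Since the output of $\calM_0$ under $\adv$ can be realized as a (randomized) function of the output of $\calA$ under $\adv'$—computed by $\adv'$ itself without further access to $b$—the interaction with $\calM_0$ is a post-processing of the interaction with $\calA$. By assumption $\calA$ is $\mu_0$-GDP for every adversary, and post-processing preserves GDP, so $\calM_0$ is $\mu_0$-GDP. The only potential subtlety is making the adaptive nature of the $F_i$'s rigorous: one must be careful that $F_i$ is measurable in the prior transcript and does not secretly depend on $b$ in a way that violates the $\mu_i$-GDP hypothesis on $\calM_i$. I expect this to be routine rather than a real obstacle, since for each fixed history the existence of $F_i$ follows directly from the definition of $\mu_i$-GDP applied to the analyst's deterministic choice of $\calM_i$ at that history.
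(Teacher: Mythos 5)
Your proposal is correct and follows essentially the same reduction as the paper: you simulate the interaction with $\calM_0$ as a post-processing of an interaction with $\calA$, using in each round the randomized map $F_i$ guaranteed by $\mu_i$-GDP of $\calM_i$, and then invoke the simulation-based/post-processing closure of GDP. Aside from swapping the roles of the names $\adv$ and $\adv'$, the argument matches the paper's proof.
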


\begin{proof}
The proof follows from the simulation-based definition of GDP. Given an adversary $\adv'$ that interacts with $\mu_0$ and a pair of data sets $D$ and $D'$, we can construct an adversary $\adv$ that interacts with $\calA$ and simulates the transcript of $\calM_0$'s interaction with $\adv'$: for each query $(\mu_i, \calM_i)$ issued by $\adv'$, the adversary $\adv$ constructs a (possibly randomized) function $F_i$ such that $\calM_i(D)\sim F_i(Z_i)$ and $\calM_i(D') \sim F_i(\mu_i + Z_i)$. It then issues the query $\mu_i$ to $\calA$, receives $H_i$, and feeds $F_i(H_i)$ as a response to $\adv'$. If $\calA$ satisfies $\mu_0$-GDP, then there is a randomized function $\Fstar$ such that $\Fstar(b\cdot \mu_0 + Z_0)$ faithfully simulates the transcript of the interaction of $\calA(b)$ with $\adv$ when $Z_0\sim \calN(0,1)$. The internal messages that $\adv$ feeds to $\adv'$ then correctly simulate the interaction of $\adv'$ with $\calM_0$.
\end{proof}

To show that mechanism $\calA$ indeed satisfies $\mu_0$-GDP, we explicitly specify the postprocessing function $\Fstar$ (depending on $\mu_0$) that takes as input $W_0\leftarrow b\mu_0+Z_0$, where $Z_0\sim\calN(0,1)$ and simulates the responses $\calA(b)$ would give to a sequence of (adaptively chosen) messages $\mu_i$. 

Without loss of generality, we make two simplifying assumptions: First, we assume the sequence of queries satisfies $\sum_{i=1}^T \mu_i^2 \leq \mu_0^2$. This can be enforced by having $\calA$ (and $\Fstar$) refuse to answer queries that would push the sum over the threshold. The analyst/adversary can predict this decision on its own, so it conveys no additional information. Second, we assume that $\mu_0 =1$, since we can replace the actual queries $\mu_i$ with $m_i = \frac{\mu_i}{\mu_0}$ and rescale answers appropriately.

\paragraph{Warm-up: An Online Simulation for Nonadaptive Adversaries.} Suppose (with significant loss of generality!) that the analyst is not adaptive. That is, it decides ahead of time on the sequence $\bfm = (m_1,..,m_T)$, with $\|\bfm\|_2\leq 1$, that it will ask as queries.

On input $W_0\sim \calN(b,1)$ and this sequence,  
$\Fstar$ must compute a sequence of variables $W_i$ such that $W_1,...,W_T$ are jointly distributed as $\sim \calN(b \bfm, \bbI)$. However, we will get an appropriate simulator by imposing an additional requirement: it must  do this computation \textit{online}. That is,  $W_i$ can depend on $(m_1,...,m_i)$ and any internal state, but not on future queries $m_{i+1},...,m_T$.

A natural approach is to respond with $m_i W_0$ at each round $i$.  This sequence  has the correct mean $b \bfm$; unfortunately, its covariance matrix is  $\bfm^\top \bfm$, which is incorrect. One can fix this by adding a component of \textit{independent} random noise $(U_1,...,U_T)$, distributed as $\calN(0, \Sigma_T)$ where $\Sigma_T = \bfI - \bfm^\top\bfm$. Since $\|\bfm\|\leq 1$, the matrix $\Sigma_T$ is PSD and thus a valid covariance matrix for a multivariate Gaussian distribution.  The sequence $W_i = m_i W_0 + U_i$ still has the right mean, and its covariance matrix is now $\bfm^\top \bfm + \Sigma_T = \bbI$, as desired. If the $m_i$ were known ahead of time, we would be done, since the simulator could generate $(U_1,...,U_T)$ before the start of the interaction, and add the $U_i$'s as the interaction proceeds.
The issue is that we have set our selves the more stringent goal of executing the simulation without knowing the $m_i$'s ahead of time. The challenge, therefore, is to sample the $U_i$'s online, as the values $m_i$ are revealed. 

To see why this is possible, recall that the Cholesky decomposition of $\Sigma_T$ gives a lower-triangular matrix $L$ such that $\Sigma_T = L L^\top$. The matrix $L$ is  uniquely determined if we require that $L$ have exactly $rank(\Sigma_T)$ positive diagonal elements and $T-rank(\Sigma_T)$ columns that are all zero. We refer to this $L$ as the \emph{canonical Cholesky factor} of $\Sigma_T$. 

For any $i$ between 1 and $T$, let $L_i$ denote the $i\times i$ submatrix of $L$ restricted to the first $i$ rows and columns. We can write $L$ as four blocks, with $L_i$ on the top left and zeros on the top right: 
\[L= \begin{bmatrix}
L_i & \mathbf{0}_{i \times (T-i)}\\
\vdots & \ddots \\
\end{bmatrix}
\]
The product $L L^\top$ can also be written in block form, where the top left 
block is $L_iL_i^\top$. This must equal the corresponding submatrix of $\Sigma_T$, which we know is $\bbI - \bfm_i \bfm_i^\top$ where $\bfm_i = (m_1,...,m_i)$. That is,
\[L L^\top = \begin{bmatrix}
L_iL_i^\top & \cdots\\
\vdots & \ddots\\
\end{bmatrix} = \Sigma_T = \begin{bmatrix}
\bbI - \bfm_i \bfm_i^\top & \cdots\\
\vdots & \ddots\\
\end{bmatrix} \, .\]

Thus, $L_i$ is the canonical Cholesky factor of $\bbI - \bfm_i \bfm_i^\top$, which is the covariance matrix we need for $U_1,...,U_i$. We can compute $L_i$ given only the queries $(m_1,...m_i)$ from the first $i$ rounds. 

We obtain the following simple algorithm for $\Fstar$, on input $W_0$:
\begin{enumerate}
    \item Generate $V_1,...,V_T$ i.i.d. from $\calN(0,1)$.
    \item For $i=1$ to $T$: 
    \begin{enumerate}
        \item Receive $m_i$ from $\adv$.
        \item Let $\bfm_i = (m_1,...,m_i)$ and let $L_i$ be the canonical Cholesky factor\footnote{Specifically, we can write $L_i=\begin{bmatrix}
                 L_{i-1} & \mathbf{0}\\
                 -m_i\left(L_{i-1}^{-1}\bfm_{i-1}\right)^\top & \frac{1-\ltwo{\bfm_i}^2}{1-\ltwo{\bfm_{i-1}}^2}
                 \end{bmatrix}$.} 
         of $ \Sigma_i = \bbI - \bfm_i \bfm_i^\top$. 
        \item Let $U_i$ be the last entry of $L_i \bfv_i$ where $\bfv_i = (V_1,...,V_i)$. 
        \item Output $W_i = m_i W_0 + U_i$.
    \end{enumerate}
\end{enumerate}

For every fixed $\bfm$ and $b$, if $W_0\sim \calN(b, 1)$, then the joint distribution of the responses $W_1,...,W_T$ is $\calN(b\bfm , \bbI)$. In particular, for every setting of $W_1,..,W_{i-1}$, the random variable $W_i$ is distributed as $\calN(b m_i, 1)$ (which is the distribution of the response sent by $\calA(b)$). This proves the result for the limited case when $\bfm$ is fixed ahead of time by $\adv$.

\paragraph{Fully Adaptive Composition.} We now argue that the same simulation works for fully adaptive queries, where $m_i$ may depend on the interaction so far. Our analysis of the noninteractive setting showed the following:  \emph{for every setting of $b$ and $W_1,..,W_{i-1}$ and the first $i$ queries $m_1,...,m_i$, the random variable $W_i$ is distributed as $\calN(b m_i, 1)$}. Furthermore, it is computed using only the values $W_0$,  $m_1,...,m_i$ and $w_1,...,w_{i-1}$.

Because of the statement holds for all $m_i$ (and not, say, just with high probability), the statement continues to hold when $m_i$ is selected by $\adv$. By induction over the rounds $i$, the transcript of an interaction with $\Fstar(W_0)$ (where $W_0\sim\calN(b,1)$) will be identically distributed to that of an interaction with $\calA(b)$. This concludes the proof of Theorem~\ref{thm:GDP}.
\end{proof}

\section*{Acknowledgements}

We thank Thomas Steinke for helpful discussions. 

\bibliographystyle{alpha}
\bibliography{reference}

\end{document}